\newcommand{\comment}[1]
{\ifthenelse{\boolean{commentson}\AND\boolean{commentsaon}}
   {{\par\noindent\mbox{}{\small\blue[ *** #1 ]\par}\noindent\par}}{}}
\newcommand{\commenta}[1]
{\ifthenelse{\boolean{commentsaon}}
   {{\par\noindent\mbox{}{\small\color[rgb]{0, .4, 0}[ *** #1 ]\par}\noindent\par}}{}}
\newcommand\blue     {\color{blue}}
\newtheorem{theorem}{Theorem}
\newtheorem{corollary}[theorem]{Corollary}
\newtheorem{lemma}[theorem]{Lemma}
\newtheorem{proposition}[theorem]{Proposition}
\newtheorem{example}[theorem]{Example}
\newtheorem{remark}[theorem]{Remark}
\newcommand*{\seq}[2][n]  {{#2_{1}, \allowbreak \ldots, \allowbreak #2_{#1}}}
\newcommand*{\SEQ}[3]
            {{\ensuremath{#1_{#2}, \allowbreak \ldots, \allowbreak #1_{#3}}}}
\newcommand*{\notmodels}{\mathrel{\,\not\!\models}}
\newcommand*{\HU}{{\ensuremath{\cal H U}}\xspace}
\newcommand*{\M}{{\ensuremath{\cal M}}\xspace}
\newcommand*{\F}{{\ensuremath{\mathscr F}}\xspace}
\renewcommand*{\P}{{\ensuremath{\cal P}}\xspace}
\newcommand*{\X}{{\ensuremath{\cal X}}\xspace}
\begin{document}

\title 
[Program answers and least Herbrand models]
%
%
%
{On definite program answers \\ and least Herbrand models%
}

\author[W. Drabent]
       {W{\l}odzimierz Drabent\\
         Institute of Computer Science,
         Polish Academy of Sciences,\\
         ul. Jana Kazimierza 5,
         01-248 Warszawa, Poland
         \\ and \\
         Department of Computer and Information Science,
         Link\"oping University\\
         S -- 581\,83   Link\"oping, Sweden      \\
         \email{drabent\,{\it at}\/\,ipipan\,{\it dot}\/\,waw\,{\it dot}\/\,pl}
}

\submitted
{11-01-2016}

%

%
%

\maketitle

\begin{abstract}
A sufficient and necessary condition 
is given under which least Herbrand models exactly characterize the answers
of definite clause programs. 

To appear in Theory and Practice of Logic Programming (TPLP)
\end{abstract}

\begin{keywords}
logic programming, least Herbrand model, declarative semantics, function symbols
\end{keywords}

\section{Introduction}
The relation between answers of definite logic programs and their least Herbrand
models is not trivial.
In some cases the equivalence
\begin{equation}
\label{eq}
    \M_P\models Q \ \ \mbox{ iff } \ \ P\models Q  
\end{equation}
does not hold
(where $P$ is a definite program, $\M_P$ its least Herbrand model, and $Q$ a
query, i.e.\ a conjunction of atoms%
\footnote
{%
The semantics of non closed formulae is understood as usually
(see e.g.\  \cite{vanDalen,Apt-Prolog}),
so that ${\it IT}\models Q$ iff
${\it IT}\models\forall Q$, where ${\it IT}$ is an interpretation or a
theory,  $Q$ a formula, and $\forall Q$ its
universal closure.%
}%
).
So programs with the same least Herbrand model may have different sets of
answers. 
(By definition, $Q$ is an answer of $P$ iff $P\models Q$.)
For a simple counterexample \cite[Exercise 4.5]{Doets},
assume that the underlying language has only one function symbol, a constant $a$.
Take a program $P=\{\, p(a)\,\}$.  
Now $\M_P\models p(X)$ but $P\notmodels p(X)$.
This counterexample can be in a natural way generalized for any  finite set
of function symbols,
see the comment following the proof of Prop.\,\ref{prop:counterexample}.

Equivalence (\ref{eq}) holds for ground queries
(\citeNP{Lloyd87}, Th.\,6.6;\,\,\citeNP{Apt-Prolog}, Th.\,4.30).
For a possibly nonground $Q$ (and a finite $P$)
a sufficient condition for (\ref{eq})
is that there are infinitely many constants in the
underlying language
(\citeNP
{DBLP:books/mk/minker88/Maher88};  \citeNP[Corollary 4.39]{Apt-Prolog}).
\citeN{DBLP:books/mk/minker88/Maher88} states without proof
that instead of an infinite supply of constants it is sufficient that there is
a non constant function symbol not occurring in $P,Q$. 
The author is not aware of any proof of this property
(except for \cite[Appendix]{drabent.arxiv.coco14}).

This paper presents a more general sufficient condition,
and shows that the condition is also a necessary one.
To obtain the sufficient condition, we show a property of (possibly
nonground) atoms containing symbols not occurring in a program $P$. 
Namely, when such atom is true in $\M_P$ then, under certain conditions, a
certain more general atom is a logical consequence of $P$.
As an initial step, we obtain
a generalization of the theorem on constants \cite{shoenfield67},
for a restricted class of theories, namely definite clause programs.
We also give an alternative proof for the original theorem.

\paragraph{Related problem.}

This paper studies (in)equivalence of
two views at the declarative semantics of definite
clause programs.  One of them considers answers true in the least Herbrand
models of programs, the other -- answers that are logical consequences of
programs.

The subject of this paper should be compared with a related issue
(which is outside of the scope of this paper).
There exists (in)equivalence between the declarative
semantics and the operational one, given by SLD-resolution.  
  As possibly first pointed in
(\citeNP{DM87};\,\,\citeyearNP{DM88}),
two logically equivalent programs 
(i.e.\ with the same models, and thus the same logical consequences)
may have different sets of SLD-computed answers for the same query.
For instance take
$P_1= \{\, p(X).\,\}$, and 
$P_2= \{\, p(X).\ p(a).\,\}$
Then for a query $p(Y)$ program $P_2$ gives two distinct computed answers,
and $P_1$ one.  This phenomenon gave rise to the {\em s-semantics}, 
see e.g.\ \cite{DBLP:journals/tcs/Bossi09} for overview and references.

\paragraph{Preliminaries.}

We consider definite clause logic programs.  A query is a conjunction of atoms.
A query $Q$  is an {\em answer} (or a {\em correct answer}) of a program $P$  
iff $P\models Q$.
\citeN{Apt-Prolog} calls it a correct instance (of some query).
We do not need to refer to SLD-computed answers, as
 each computed answer is an
answer, and each answer is a computed answer for some query,
by soundness and completeness of SLD-resolution.
Similarly, we do not need to consider to which query $Q_0$ a given query is
an answer.

The Herbrand universe
(for the alphabet of function symbols of the underlying language) 
will be denoted by \HU, and the least Herbrand model of a program $P$ by $\M_P$.
Remember that $\M_P$ depends on the underlying language.
We require $\HU\neq\emptyset$.
Names of variables will begin with an upper-case letter.
Otherwise we use the standard definitions and notation of \cite{Apt-Prolog},
including the list notation of Prolog.
(However in discussing the semantics of first order formulae we use a
standard term ``variable assignment'' instead of ``state'' 
used in \cite{Apt-Prolog}.)

The paper is organized as follows.  The next section presents some 
necessary definitions.
Section \ref{sec:lemma} 
shows how existence of answers containing symbols not occurring in the program
implies existence of more general answers.  The main result of this section
is compared with theorem on constants \cite{shoenfield67}.
Section \ref{sec:main}
contains the central technical lemma of this paper.
Section \ref{sec:H}
studies when the least Herbrand models provide an exact characterization of
program answers.  A new sufficient condition for equivalence
(\ref{eq}) is presented, and it is shown in which sense the condition is a
necessary one.

\section{Definitions}

This section introduces three notions needed further on.
Let $\F$ be the set of function symbols of the underlying language;
let $F\subseteq \F$.  An {\em alien} w.r.t.\ $F$  
is a non-variable term with its main function symbol from $\F\setminus F$.
An alien w.r.t.\ a theory $T$ (for instance a program)
means an alien w.r.t.\ the set of function symbols occurring in $T$.
An occurrence of an alien $t$ (w.r.t.\ $F$, in an atom or substitution)
will be called a {\em maximal alien} if the occurrence is not within
an alien $t'\neq t$.

By a generalization of a query we mean the result of systematic replacement
of maximal aliens in the query by new variables.  More formally, 
let \P be a theory or a set of function symbols.
Let the maximal aliens of a query $Q$ w.r.t.\ $\P$
be the occurrences in $Q$ of distinct terms $\seq t$.
Let $\seq V$ be distinct variables not occurring in $Q$.
Let a query $Q'$ be
 obtained from $Q$ by replacing (each occurrence of) 
$t_i$ by $V_i$, for $i=1,\ldots,n$.  
(So $Q=Q'\{V_1/t_1,\ldots,V_n/t_n\}$.)
Such $Q'$ will be called
{\em $Q$ generalized} for $\P$.
We will also call it a/the {\em generalization} of $Q$ (for \P).
Note that it is unique up to variable renaming. 

\begin{example}
The standard append program APPEND \cite[p.\,127]{Apt-Prolog}
contains two function symbols  $[\,]$ and $[\ |\ ]$.
Terms $a,f([a,b])$ are aliens w.r.t.\ APPEND, term $[a,b]$ is not.
Maximal aliens in $A = app( [a], [[\,]\,|\,g(a,X)], [g(a,Y),Z,[a]] )$
are the first and the last occurrences of $a$ and the (single) occurrences of
 $g(a,X)$ and $g(a,Y)$.
Atom $app( [V_1], [[\,]|V_2], [V_3,Z,[V_1]] )$
is $A$ generalized for APPEND.
\end{example}

Let $Q'$ be a query not containing aliens w.r.t.\ \P, and 
$\theta$ be a substitution such that $Dom(\theta)\subseteq {\it Var}(Q')$.
Then $Q'$ is a generalization of  $Q'\theta$ for \P (and for $\P\cup\{Q'\}$)
    iff  $\theta=\{V_1/t_1,\ldots,V_n/t_n\}$
    where $\seq t$ are distinct aliens w.r.t.\ \P.

 The correspondence between a ground atom and its generalization
 is described, in other terms, in
 \cite[Def.\,4]{naish.tplp.floundering14}.
 It is used in that paper to represent nonground atoms by
 ground ones, in analysis of floundering in the context of delays.

\section{On program answers and aliens}
\label{sec:lemma}

Given a query containing aliens which is an answer of a program $P$,
this section shows which more general queries are answers of $P$.
The main result (Lemma \ref{lemma:alien:substitution})
is compared with theorem on constants,
used by \cite{DBLP:books/mk/minker88/Maher88} to prove 
equivalence (\ref{eq}) for a case with an infinite alphabet of constants.

It is rather obvious that answers containing aliens can be generalized.
Assume that a query
 $Q$ is an answer of $P$, and that $Q$ contains aliens w.r.t.\ $P$. 
Then $Q$ is
a proper instance of some computed answer $Q'$.
It is however not obvious which replacements of aliens in $Q$ by variables 
result in answers.

\begin{example}
By replacing aliens w.r.t.\ $P$ by variables in an answer $Q$, 
we obtain some queries which are answers of $P$,
and some which are not.
Let $P= \{p(X,X,Y)\}$ and $Q=p( f(a),f(a),b )$.
So $P\models Q$.
Now $p(f(V_1),V_2,b)$ and $p(V_1,V_2,b)$ are not answers of $P$, 
but $p(f(V),f(V),Z)$,  $p(V,V,b)$ and  $p(V,V,Z)$ are.
\end{example}

\begin{lemma}
\label{lemma:alien:substitution}  
Let $P$ be a program, $Q$ a query, and
 $\rho=\{V_1/t_1,\ldots,V_k/t_k\}$ be a substitution where
 $\seq[k]t$ are distinct aliens w.r.t.\ $P\cup\{Q\}$.  Then
\begin{equation}
\label{th:alien:substitution1}
P\models Q \ \  \mbox{ iff } \ \  P\models Q\rho\,.
\end{equation}
\end{lemma}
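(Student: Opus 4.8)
The plan is to prove the two directions of (\ref{th:alien:substitution1}) separately; the left-to-right implication is routine and the right-to-left one carries all the content. For $P\models Q\Rightarrow P\models Q\rho$ I would just note that $Q\rho$ is an instance of $Q$ and that $P\models Q$ abbreviates $P\models\forall Q$, whose instances are all entailed. In detail, for any model $I\models P$ and any valuation $\sigma'$ of the variables of $Q\rho$, set $\tau(V_i)$ to be the value of $t_i$ under $I,\sigma'$ and $\tau(X)=\sigma'(X)$ for the remaining variables; the substitution lemma gives $I,\sigma'\models Q\rho$ iff $I,\tau\models Q$, and the right-hand side holds because $I\models\forall Q$.

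For the converse $P\models Q\rho\Rightarrow P\models Q$ I would argue by contraposition. Assume $P\not\models Q$ and fix a model $I\models P$ and a valuation $\sigma$ with $I,\sigma\not\models Q$; write $d_i=\sigma(V_i)$. Let $g_1,\dots,g_r$ be the distinct main function symbols occurring as heads of $t_1,\dots,t_k$; by hypothesis none of them occurs in $P$ or in $Q$. The goal is to build a model $I'\models P$ and a valuation $\sigma'$ falsifying $Q\rho$ by keeping the interpretation of every symbol of $P$ and of $Q$ exactly as in $I$, and only choosing interpretations for $g_1,\dots,g_r$ and values for the variables of the $t_i$ so that each alien $t_i$ evaluates, under $I',\sigma'$, to $d_i$, with $\sigma'$ agreeing with $\sigma$ on ${\it Var}(Q)\setminus\{V_1,\dots,V_k\}$.

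Granting such a construction, the lemma follows: since the $g_j$ do not occur in $P$, reinterpreting them leaves $I'$ a model of $P$; and since they do not occur in $Q$, while $\sigma'$ agrees with $\sigma$ off the $V_i$ and routes each $V_i$ through $t_i$ to $d_i=\sigma(V_i)$, the same substitution lemma turns $I,\sigma\not\models Q$ into $I',\sigma'\not\models Q\rho$. Hence $P\not\models Q\rho$, as required.

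The step I expect to be the main obstacle is defining the interpretations of $g_1,\dots,g_r$ \emph{consistently}. Fixing the value of a single fresh symbol so that one alien reaches a prescribed element is unproblematic, and aliens with different heads are interpreted independently; the only possible conflict is between two distinct aliens $t_i=g(\bar s_i)$ and $t_j=g(\bar s_j)$ sharing a head $g$, whose distinct argument tuples $\bar s_i\neq\bar s_j$ happen to evaluate to the same tuple of domain elements while $d_i\neq d_j$. To exclude this I would first replace $I$ by a falsifying model of $P$ that is rich enough to separate the finitely many relevant tuples: variables inside the $t_i$ can be assigned pairwise distinct values, and distinct (sub)terms can be kept distinct by interpreting the nested alien symbols injectively from the inside out and by arranging that syntactically distinct non-alien ground terms have distinct denotations. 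Such a model exists because definite programs are Horn and therefore closed under direct products, so one may enrich any falsifying model with a further factor --- for instance a Herbrand model, in which distinct terms are automatically distinct --- without disturbing $I\models P$ and without making $Q$ true again (a conjunction of atoms stays false in a direct product as soon as it is false in one factor). Once all relevant argument tuples are pairwise distinct, each $g_j^{I'}$ may be defined to map the tuple coming from $t_i$ to $d_i$ and to take arbitrary values elsewhere, which finishes the construction.
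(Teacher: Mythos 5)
Your proposal is correct in outline, but it takes a genuinely different route from the paper's. The paper proves the nontrivial direction operationally: by (strong) completeness of SLD-resolution, $Q\rho$ is an instance $Q\varphi\sigma$ of a computed answer $Q\varphi$ for the query $Q$; since every function symbol occurring in $\varphi$ occurs in $P$ or $Q$, each $V_i\varphi$ must be a variable, the distinctness of the $t_i$ forces $Q\varphi$ to be a variant of $Q$, and soundness gives $P\models Q$. You instead stay entirely on the model-theoretic side: contraposition, reinterpretation of the fresh head symbols, and---crucially---passage to a direct product with a Herbrand factor so that the relevant argument tuples become pairwise distinct before the values of the $g_j$ are prescribed. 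That last move is exactly what is needed to rescue the semantic proof of the theorem on constants, which the paper itself observes (in its discussion of Maher's remark, with the example $t_i=f^i(a)$) does not survive the naive generalization to non-constant aliens: in a fixed model the prescribed values $d_i$ may be unreachable, whereas in the product you only need to hit pairs $(d_i,e_i)$ with freely chosen, distinct Herbrand components. What each approach buys: the paper's proof is a few lines long but leans on soundness and strong completeness of SLD-resolution; yours is self-contained first-order model theory, at the price of invoking preservation of Horn theories under direct products and a term-separation argument that you leave informal---making it rigorous requires choosing the assignment so that syntactically distinct subterms of the $t_i$ get distinct denotations (essentially the content of the paper's Lemmas \ref{lemma:unifier} and \ref{lemma:distinct}), together with an inside-out induction on nested aliens, since reinterpreting an inner alien's head changes the input tuples of the outer ones; your phrase ``the only possible conflict'' understates this nesting issue, but your inside-out injective interpretation does handle it, so the gap is one of detail rather than substance.
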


Note that terms $\seq[k]t$ may be nonground 
(and may contain variables from $\{\seq[k]V\}$), some
$t_i,t_j$ may be unifiable, or contain common variables,
$Q$ may contain variables other than $\seq V$
and may contain aliens w.r.t. $P$.
So $Q$ is not necessarily a generalization of $Q\rho$ for $P$,
but it is one for $P\cup\{Q\}$.

\begin{example}
In the previous example, the cases in which the more general atom is an
answer of $P$ satisfy conditions of Lemma \ref{lemma:alien:substitution},
and the remaining ones do not.

\end{example}

\begin{proof}[Proof (Lemma \ref{lemma:alien:substitution})]

Without loss of generality assume that variables $\seq[k]V$ occur in $Q$.
Let  $\seq[l]X$ be the remaining variables of $Q$.
The ``only if'' case is obvious.

Assume $P\models Q\rho$.  
By completeness of SLD-resolution,
$Q\rho$ is an instance of some computed answer $Q\varphi$ for $P$ and $Q$:
$Q\rho=Q\varphi\sigma$.
Each function symbol occurring in $\varphi$ occurs in $P$ or $Q$.
Moreover (for $i=1,\ldots,k$) $t_i=V_i\varphi\sigma$
and the main symbol of $t_i$ does not occur in $V_i\varphi$;
hence $V_i\varphi$ is a variable.
As $\seq[k] t$ are distinct, variables $V_1\varphi,\ldots,V_k\varphi$ are
distinct. 
  Similarly,  $X_j=X_j\varphi\sigma$ for $j=1,\ldots,l$, thus
  $V_1\varphi,\ldots,V_k\varphi,X_1\varphi,\ldots,X_l\varphi$ 
  are distinct variables.
  Thus $Q\varphi$ is a variant of $Q$ and, by soundness of
  SLD-resolution, $P\models Q$.
\end{proof}

\begin{corollary}
\label{th:generalized}
\label{cor:generalized}
Let $P$ be a program, $Q$ a query, and $Q'$ be $Q$ generalized for $P$.
Then $P\models Q$ iff  $P\models Q'$.
\end{corollary}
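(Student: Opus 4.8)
The plan is to derive the corollary directly from Lemma~\ref{lemma:alien:substitution}, the only real work being to match the hypotheses. By the definition of generalization, let the maximal aliens of $Q$ w.r.t.\ $P$ be the occurrences of distinct terms $t_1,\ldots,t_n$, let $V_1,\ldots,V_n$ be the fresh variables introduced, and set $\rho=\{V_1/t_1,\ldots,V_n/t_n\}$, so that $Q=Q'\rho$. I would then apply the lemma with $Q'$ playing the role of the lemma's query and $\rho$ the role of its substitution; this yields $P\models Q'$ iff $P\models Q'\rho=Q$, which is exactly the claim.

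The only thing to verify is the lemma's hypothesis, namely that $t_1,\ldots,t_n$ are distinct aliens w.r.t.\ $P\cup\{Q'\}$, not merely w.r.t.\ $P$. The $t_i$ are distinct aliens w.r.t.\ $P$ by construction, so what must be checked is that passing from $P$ to $P\cup\{Q'\}$ does not enlarge the relevant set of function symbols in a way that spoils the alien property. The key observation is that $Q'$ contains no function symbol outside those of $P$: every function-symbol position of $Q'$ comes from a position of $Q$ lying strictly above all maximal aliens, and the subterm of $Q$ at such a position, being a non-variable term that is not itself an alien (otherwise it would be a maximal alien and hence replaced by a variable), has its main symbol among the symbols of $P$. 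Consequently the function symbols of $P\cup\{Q'\}$ coincide with those of $P$, so ``alien w.r.t.\ $P$'' and ``alien w.r.t.\ $P\cup\{Q'\}$'' denote the same terms, and the $t_i$ remain distinct aliens w.r.t.\ $P\cup\{Q'\}$ as required.

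I expect this step---establishing that generalization introduces no fresh function symbols---to be the main (though minor) obstacle, since it is precisely where maximality of the replaced aliens is used. Indeed this is the content of the remark stated just before the corollary, that $Q'$ is a generalization of $Q'\rho$ for $P$ and for $P\cup\{Q'\}$ iff $\rho$ substitutes distinct aliens w.r.t.\ $P$; invoking that remark furnishes the hypothesis of Lemma~\ref{lemma:alien:substitution} at once, and the corollary follows.
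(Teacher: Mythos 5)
Your proof is correct and takes essentially the same route as the paper: write $Q=Q'\rho$ for the substitution $\rho$ reinstating the maximal aliens, and apply Lemma~\ref{lemma:alien:substitution} to $P$, $Q'$, and $\rho$, the only point to check being that the $t_i$ are distinct aliens w.r.t.\ $P\cup\{Q'\}$. Your explicit argument that generalization leaves $Q'$ with no function symbols outside those of $P$ is just a spelled-out version of the paper's parenthetical justification of that same point.
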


\begin{proof}
$Q=Q'\rho$ for a certain  $\rho=\{V_1/t_1,\ldots,V_k/t_k\}$.
The premises of Lemma \ref{lemma:alien:substitution}  
are satisfied by $P$, $Q'$, and $\rho$
(as $\seq[k]t$ are aliens w.r.t.\ $P$, but also w.r.t.\ $Q'$).
\end{proof}

\begin{example}
Consider again program APPEND.  Assume that the underlying language 
has more function symbols than those occurring in the program, 
i.e.\  $[\,]$,  $[\ |\ ]$.
Assume that we know that the least Herbrand model $\M_{\rm APPEND}$
contains an atom
  $Q =\linebreak[3] app([\seq[m]t], [\SEQ{t}{m+1}k], [\seq[k]t])$,
where $\seq[k]t$ are distinct aliens w.r.t.\ APPEND.
Note that $P\models Q$, as equivalence (\ref{eq}) holds for ground queries.
{\sloppy\par}

By Corollary \ref{cor:generalized},
  ${\rm APPEND}\models  app([\seq[m]V], [\SEQ{V}{m+1}k], [\seq[k]V])$,
where $\seq[k] V$ are distinct variables.
Hence, for any terms $\seq[k] s$,
\mbox{%
${\rm APPEND}\models  app([\seq[m]s], [\SEQ{s}{m+1}k], [\seq[k] s])$%
}.
{\sloppy\par}
\end{example}

\begin{example}
 Consider the map colouring program  \cite[Program 14.4]{Sterling-Shapiro}.
 We skip any details, let us only mention that the names of colours and
 countries do not occur in the program.
 (The function symbols occurring in the program are 
 $F=\{\, [\,],
 [\, |\,],
 {\it region}   \,\}
 $.)
 By Corollary  \ref{cor:generalized},
 for any answer $Q$ of the program,
 the generalization $Q'$ of $Q$ w.r.t.\ $F$ is an
 answer of the program.  So is each instance of $Q'$.  
 Thus systematic replacing (some) names of colours or countries in $Q$ by
 other terms results in a query $Q''$ which is an answer of the program.%
\footnote{%
Thus it is possible that neighbouring countries get the same colour.
This does not mean that the program is incorrect.
Its main predicate {\it color\_map}
describes a correct map colouring provided that its second argument
is a list of distinct colours.
} 
\end{example}

The proof of equivalence (\ref{eq})
for an infinite set of constants of 
\cite[proof of Prop.\,6]{DBLP:books/mk/minker88/Maher88} employs a so called
theorem on constants  \cite{shoenfield67},
see also free constant theorem in \cite[p.\,56]{HandbookLAILP:FOL}.
The theorem states that (\ref{th:alien:substitution1}) holds for an arbitrary
theory $P$ and formula $Q$, when the distinct aliens $\seq[k]t$ are constants.
Its proofs in \cite{shoenfield67,HandbookLAILP:FOL} are syntactical, but a
rather simple semantic proof is possible:

Let $F$ be the set of function and predicate symbols from $P,Q$, let 
\X be the set of the free variables of $Q$.
Notice that
for any interpretation $I$ (for $F$) and any variable assignment $\sigma$
(for \X)
there exists a variable assignment $\sigma'$
(for $\X\setminus\{\seq[k]V\}$)
and an interpretation $I'$ 
(for $F\cup\{\seq[k]t\}$) such that
$\sigma'(X)=\sigma(X)$ for each $X\in\X\setminus\{\seq[k]V\}$,
 $I'(t_i)=\sigma(V_i)$ for each $i$, and all the symbols
of $F$ have the same interpretation in $I$ and $I'$.  
Thus $I\models P$ iff $I'\models P$, and 
$I\models_\sigma Q$ iff $I'\models_{\sigma'} Q\rho$.
Conversely, for each interpretation $I'$ for $F\cup\{\seq[k]t\}$
and variable assignment $\sigma'$ for $\X\setminus\{\seq[k]V\}$
there exist $I,\sigma$ as above. (In particular, the two equivalences hold.)
Now the theorem follows:
\smallskip

\begin{tabular}{l}
$P\models Q$ iff  \\
for every $I,\sigma$ (as above) $I\models P$ implies $I\models_{\sigma}Q$
iff \\
for every $I',\sigma'$ (as above) $I'\models P$ implies
$I'\models_{\sigma'}Q\rho$ iff \\
$P\models Q\rho$.
\end{tabular}
\smallskip

\citeN[p.\,634]{DBLP:books/mk/minker88/Maher88}
states that ``The same effect 
[as adding new constants] could be obtained with one new function symbol (of
arity $>0$) to obtain new ground terms with new outermost function symbol.''
This idea does not apply to the proof of the previous paragraph;
when $\seq[k]t$ are such terms then the proof fails.%
\footnote{%
  Informally, this is because such new terms cannot be interpreted
  independently, in contrast to $k$ new constants.
  Sometimes no interpretation for the new symbol $f$ is possible,
  such that  $\seq[k]t$ are interpreted as a given $k$ values.
  For instance take $t_i=f^i(a)$ for $i=1,\ldots,k$.
  Then for any interpretation for $f$,
  if $t_1,t_2$ have the same value then all $\seq[k]t$ also have the same value.
} %
So do the proofs of  \cite{shoenfield67,HandbookLAILP:FOL}.
In the context of \cite{shoenfield67} -- first order logic with equality -- the
generalization of the theorem on constants to terms with a
new outermost symbol does not hold.
For a counterexample, note that 
$\{a=b\}\models f(a)=f(b)$ \ but \ 
$\{a=b\}\notmodels V_1=V_2$.
The generalization in Lemma \ref{lemma:alien:substitution} is
sound and has a simple proof, due to restriction to 
definite programs and queries.

From Lemma \ref{lemma:alien:substitution} it follows that equivalence (\ref{eq})
holds whenever the underlying language has a non constant function symbol $f$
(or a sufficient number of constants) not occurring in $P,Q$.%
\footnote{%
    Assume that  $\seq[k]V$ are the variables of $Q$, and that there exist
    distinct ground terms $\seq[k]t$ with their main symbols
    not occurring in $P,Q$.
    Let  $\rho=\{V_1/t_1,\ldots,V_k/t_k\}$.
    Assume $\M_P\models Q$, so $\M_P\models Q\rho$, 
    and $P\models Q\rho$ as $Q\rho$ is ground.
    By Lemma \ref{lemma:alien:substitution}, $P\models Q$. 
} %
(See also \cite[Appendix]{drabent.arxiv.coco14} for a direct proof.)
We however aim for a more general sufficient condition for (\ref{eq}),
allowing $f$ to occur in $Q$;
in this case Lemma \ref{lemma:alien:substitution} is not applicable.

\section{Least Herbrand models and program answers}
\label{sec:main}

This section shows conditions under which
truth in $\M_P$ of a query with aliens implies 
that a certain more general query is an answer of $P$.
This is a central technical result of this paper (Lemma \ref{lemma:MP}).  
From it,
the sufficient conditions for equivalence (\ref{eq}) follow rather
straightforwardly, as shown in the next section.
We begin with proving an auxiliary property, by means the two following lemmas.

\begin{lemma}
\label{lemma:unifier}
Two distinct terms have at most one unifier of the form $\{X/u\}$ where $u$
is not a variable.  
\end{lemma}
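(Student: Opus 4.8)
The plan is to show that if a unifier of the prescribed shape exists at all, then both the variable and the term bound to it are forced by the two given terms, so the unifier is uniquely determined. Write the two distinct terms as $s$ and $t$. Since $s\ne t$, I would first fix a \emph{topmost} position $p$ at which they disagree: a position at which $s$ and $t$ carry the same symbol at every proper ancestor, but the subterms $s|_p$ and $t|_p$ have different leading symbols. Such a $p$ exists precisely because $s\ne t$.

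Next I would analyse what a unifier of the required form must do at $p$. Suppose $\{X/u\}$ with $u$ a non-variable is a unifier; then $s|_p\{X/u\}=t|_p\{X/u\}$. Applying a substitution can change the leading symbol of a term only when that leading symbol is the replaced variable, so for the two (originally distinct) leading symbols of $s|_p$ and $t|_p$ to coincide after substitution, at least one of them must be the variable $X$. They cannot both be $X$, since $s|_p\ne t|_p$; hence exactly one of them, say $s|_p$, equals $X$, and I set $r:=t|_p$. If $r$ were a variable it would be left untouched by $\{X/u\}$, forcing $u=r$ to be a variable, contrary to the hypothesis on $u$; so $r$ is a non-variable term.

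The key remaining point is to pin down $u$. From $s|_p\{X/u\}=t|_p\{X/u\}$ I get $u=r\{X/u\}$, and here the occurs check does the work: since $r$ has a leading symbol different from $X$, every occurrence of $X$ in $r$ sits at depth at least one, so if $X$ occurred in $r$ then $u=r\{X/u\}$ would contain $u$ as a proper subterm, which is impossible. Therefore $X$ does not occur in $r$, whence $u=r$. I expect this occurs-check step to be the only genuinely delicate part, as it is what excludes a spurious ``fixpoint'' binding and makes $u$ literally equal to the subterm $r$.

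Finally I would observe that the variable $X$ and the term $r$ isolated above depend only on $s$ and $t$ (through the position $p$ and the shapes of $s|_p,t|_p$), and not on the unifier. Consequently any unifier of the form $\{X'/u'\}$ with $u'$ non-variable must satisfy $X'=X$ and $u'=r$, i.e.\ it equals $\{X/r\}$; hence there is at most one such unifier, as claimed. If instead the analysis at $p$ breaks down — for instance if both $s|_p$ and $t|_p$ are variables, or if the two leading symbols are distinct function symbols — then no unifier of this form exists and the statement holds vacuously.
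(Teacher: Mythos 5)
Your proof is correct, and it is organized differently from the paper's, so a brief comparison is worthwhile. The paper proves the lemma by induction on $|s_1|+|s_2|$: it takes \emph{two} distinct substitutions $\theta=\{X/u\}$, $\theta'=\{X'/u'\}$ (both with non-variable right-hand sides) and shows that if $\theta$ unifies the distinct terms then $\theta'$ does not; the induction descends through common function symbols until it reaches a place where the terms differ, and there the same two facts you use do the work (exactly one side can be the bound variable, and $X$ cannot occur in the other side since they are unifiable). Your version dispenses with the induction by fixing a topmost disagreement position $p$ once and for all, and then proves a \emph{determinacy} statement: any unifier of the prescribed form must be exactly $\{X/r\}$ with $X$ the variable side and $r$ the non-variable side of the pair $(s|_p,t|_p)$. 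This buys you an explicit description of the unique unifier rather than just mutual exclusion of two candidates, and you prove the occurs-check step from first principles (the proper-subterm argument for $u=r\{X/u\}$) where the paper simply invokes unifiability. What the paper's formulation buys in exchange is that it never needs the apparatus of positions and ancestors: the induction stays entirely at the level of term structure, and in particular the fact you leave implicit --- that a unifier of $s,t$ also unifies $s|_p$ and $t|_p$, which holds because all proper ancestors of $p$ carry common function symbols and hence the position survives substitution --- is absorbed into the inductive step rather than needing separate justification. That implicit step is the only point in your write-up I would ask you to make explicit; it is standard and true, so it is a matter of rigor, not a gap.
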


\begin{proof}

Let $\theta=\{X/u\}$,  $\theta'=\{X'/u'\}$ be distinct substitutions,
where neither of $u,u'$ is a variable.
We show that if $s_1\theta=s_2\theta$ then  $s_1\theta'\neq s_2\theta'$, 
for any distinct terms $s_1,s_2$.
The proof is
by induction on the sum $|s_1|+|s_2|$ of the sizes of $s_1,s_2$.
(Any notion of term size would do, providing that $|t|<|t'|$ whenever $t$ is
a proper subterm of $t'$.)
Assume that the
property holds for each  $s_1',s_2'$ such that $|s_1'|+|s_2'|<|s_1|+|s_2|$.

Let $s_1\neq s_2$ and $s_1\theta=s_2\theta$.
Notice that at most one of  $s_1,s_2$ is a variable.
(Otherwise $s_1\theta,s_2\theta$ are $s_1,s_2$ -- two distinct variables,
 or exactly one of  $s_1\theta,s_2\theta$ is a variable, contradiction.)
Assume that exactly one of $s_1,s_2$, say $s_1$, is a variable.
Then $s_1=X$ (as $s_1\theta\neq s_1$), so
$X$~does not occur in $s_2$ (as $X,s_2$ are unifiable),
hence $s_2\theta=s_2=u$.
Now if $X'\neq X$ then $s_1\theta'=X$ which is distinct from any instance of
$s_2$.  Otherwise  $X'= X$, hence $s_1\theta'=u'\neq u=s_2=s_2\theta'$.

If both $s_1,s_2$ are not variables then
$s_i=f(\seq[l]{{s_i}})$, for $i=1,2$.  For some~$j$, ${s_1}_j\neq{s_2}_j$
and $|{s_1}_j|+|{s_2}_j|<|{s_1}|+|{s_2}|$.
By the inductive assumption, 
${s_1}_j\theta'\neq{s_2}_j\theta'$;
thus ${s_1}\theta'\neq{s_2}\theta'$.
\end{proof}

\begin{lemma}
\label{lemma:distinct}
let \P be a theory or a set of function symbols.
Let $\seq[m]t$ be a sequence of distinct terms, where $\seq t$ 
($0\leq n \leq m$) are variables, and  $\SEQ t{n+1}m$ are aliens w.r.t.~$\P$.
Assume that if  $\SEQ t{n+1}m$ are ground then there exist ground
aliens $\seq u$ w.r.t.\ $\P$, pairwise distinct from  $\SEQ t{n+1}m$.
Then the sequence has
a ground instance $(\seq[m]t)\sigma$ consisting of $m$ distinct aliens
w.r.t.\ $\P$. 
\end{lemma}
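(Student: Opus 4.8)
```latex
The plan is to construct the required ground substitution $\sigma$ directly,
by choosing ground aliens to substitute for the variables
$\seq t$ among $\seq[m]t$ in such a way that no collisions with the
(already alien) terms $\SEQ t{n+1}m$ are created, and no two chosen ground
terms coincide. The existence of \emph{some} ground aliens to draw from is
exactly what the hypotheses guarantee: either $\SEQ t{n+1}m$ contain a
non-ground term, or, if they are all ground, the extra supply $\seq u$ of
ground aliens distinct from them is assumed to exist. So the real content is
a counting/availability argument showing we have enough distinct ground
aliens to assign to $\seq t$ while avoiding finitely many ``forbidden''
values.

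First I would split into the two cases suggested by the hypotheses. If
$\SEQ t{n+1}m$ are all ground, then by assumption there are ground aliens
$\seq u$ w.r.t.\ \P\ pairwise distinct from $\SEQ t{n+1}m$; since the
Herbrand universe over an infinite supply of function symbols (or even a
single non-constant one) contains infinitely many ground aliens, I can in
fact pick $n$ \emph{distinct} ground aliens avoiding the finite set
$\{\SEQ t{n+1}m\}$, and assign them to $\seq t$. The subtle point is that
$\seq[m]t$ are required to be \emph{distinct as terms before instantiation};
after grounding I only need the images to stay distinct, which is arranged by
choosing distinct ground aliens for the variables and noting the
$\SEQ t{n+1}m$ are already distinct ground aliens, disjoint from the chosen
ones. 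If instead some $t_i$ (for $i>n$) is non-ground, I would first ground
those aliens by any instantiation into the Herbrand universe, taking care
that distinct aliens among $\SEQ t{n+1}m$ receive distinct ground values and
remain aliens (their main symbol, lying outside \P, is unaffected by
instantiation), and then proceed as before to fill in $\seq t$.

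The key technical step, and the main obstacle, is guaranteeing that the
instantiation of the non-ground aliens keeps them \emph{pairwise distinct}
and distinct from whatever we later choose for the variables. Two aliens with
the same outermost symbol could in principle collapse under a careless ground
instantiation; to avoid this I would instantiate them one at a time, at each
step choosing a ground value that differs from all previously fixed ground
terms, which is possible because \HU\ is infinite (it contains infinitely many
distinct ground aliens). Equivalently, one may unify-free each non-ground
alien by substituting a fresh, sufficiently large ground alien for its
variables. This is the only place where the infinitude of the relevant part
of the Herbrand universe is genuinely used, and it is where the hypothesis
about the auxiliary aliens $\seq u$ does its work in the otherwise-degenerate
all-ground case.

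Finally I would assemble the full substitution $\sigma$ as the union of the
groundings of the non-ground $t_i$ and the assignments of distinct ground
aliens to $\seq t$, chosen throughout to avoid the finitely many forbidden
coincidences. By construction every $t_i\sigma$ is a ground alien w.r.t.\ \P\
(the variables receive ground aliens, and the alien $t_i$ keep their
non-\P\ main symbol under instantiation), and all $m$ of them are pairwise
distinct. Hence $(\seq[m]t)\sigma$ is the desired ground instance, completing
the proof.
```
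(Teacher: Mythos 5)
Your overall strategy follows the same two cases as the paper (all aliens ground vs.\ some alien non-ground, exploiting in the latter case an infinite supply of ground aliens), but your handling of the key step in the second case has a genuine gap. You propose to ground the non-ground aliens \emph{one term at a time}, ``at each step choosing a ground value that differs from all previously fixed ground terms''. This breaks down when the terms share variables, because instantiating one term also partially (or fully) instantiates the others, and then there may be no freedom left. Concretely, take $t_{n+1}=f(X,Y)$ and $t_{n+2}=f(Y,X)$ with $f$ not occurring in \P: these are distinct terms, and grounding $t_{n+1}$ with $X:=a$, $Y:=a$ satisfies your criterion (there are no previously fixed terms to avoid), yet it forces $t_{n+2}$ to become the very same ground term, with no choice remaining. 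A similar conflict afflicts your final ``union'' of substitutions: a variable term $t_1=X$ may itself occur inside an alien $t_{n+1}=f(X)$, so the grounding of the aliens and the assignment of fresh aliens to the variables are not independent, and their union need not be a well-defined substitution. The criterion you state (avoid previously fixed ground values) is simply not strong enough: what must be avoided is that the value chosen for a \emph{variable} unifies some pair of still-distinct terms, and it is not obvious a priori that only finitely many choices are bad.

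This is exactly what the paper's Lemma \ref{lemma:unifier} provides, and it is the ingredient your proof is missing. The paper proceeds \emph{variable by variable} (not term by term): for each variable $X$ occurring anywhere in $t_1,\ldots,t_m$, any pair of distinct terms has at most one unifier of the form $\{X/s\}$ with $s$ non-variable (Lemma \ref{lemma:unifier}), so at most $m(m-1)/2$ ground values of $s$ are forbidden; since in this case the set of ground aliens is infinite (the main symbol of the non-ground alien is a non-constant symbol outside \P), a safe value exists, and after the step the sequence still consists of $m$ distinct terms, now with one variable fewer. Iterating over all variables yields the desired ground instance, and shared variables are handled automatically. A minor further point: in the all-ground case you justify picking $n$ distinct ground aliens by claiming the set of ground aliens is infinite; that is false in general there --- outside \P there may be only finitely many constants and no non-constant function symbol --- and this is precisely why the lemma \emph{assumes} the existence of $u_1,\ldots,u_n$; you should just use them directly, as the paper does.
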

\nopagebreak

\begin{proof}
Consider first the case of $\SEQ t{n+1}m$ ground.
Then $\sigma=\{t_1/u_1,\ldots,\linebreak[3]t_n/u_n\}$
is a substitution providing the required instance. 
{\sloppy\par}

Let some $t_j$ ($n<j\leq m$) be nonground. 
Its main symbol, say $f$, is a non-constant function symbol not
occurring in $\P$. Thus the set $Al$ of ground aliens w.r.t.\ $\P$ is infinite.
\pagebreak[3]

Let $\seq[l]X$ be the variables occurring in  $\seq[m]t$.
For some $s_1\in Al$ substitution $\theta_1=\{X_1/s_1\}$ is not a unifier of any
pair $t_i,t_j$ ($1\leq i<j\leq m$),
as by  Lemma \ref{lemma:unifier} 
each such pair has at most one unifier of the form $\{X_1/s\}$, $s\in\HU$.
Thus $(\seq[m] t)\theta_1$ is a sequence of $m$ distinct terms.
Applying this step repetitively we obtain the required sequence
$(\seq[m] t)\theta_1\cdots\theta_l$ of distinct ground terms.
\end{proof}

\begin{lemma}
\label{lemma:MP}
Let $P$ be a program,
$Q$ an atom, and  $Q'$ be $Q$ generalized for $P$.
If
\begin{enumerate}[(a)]
\item 
\label{lemma:MP:condition1}
the underlying language has a non-constant function symbol not occurring in~P,
or 

\item 
\label{lemma:MP:condition2}
$Q$ contains exactly $n\geq0$ (distinct) variables, and
the underlying language has (at least) $n$ constants not occurring in $P,Q$,
\nopagebreak
\end{enumerate}
\nopagebreak
then $\M_P\models Q$ iff $P\models Q'$.
\end{lemma}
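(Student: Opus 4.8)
The plan is to prove the two implications separately; the direction $P\models Q'\Rightarrow\M_P\models Q$ is routine, and the reverse one carries all the content. For the routine direction, note that $\M_P$ is a model of $P$, so $P\models Q'$ gives $\M_P\models Q'$, i.e.\ $\M_P\models\forall Q'$. By construction $Q=Q'\rho$ for the substitution $\rho=\{V_1/t_1,\ldots,V_k/t_k\}$ that replaces the maximal aliens of $Q$, so $Q$ is an instance of $Q'$; as truth of a universally closed query is preserved under instantiation, $\M_P\models Q$ follows. No use of~(a) or~(b) is needed here.

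For the converse $\M_P\models Q\Rightarrow P\models Q'$, the idea is to produce a single ground instance of $Q$ that is simultaneously a ``generalization-instance'' of $Q'$ by distinct aliens, and then to run Corollary~\ref{cor:generalized} backwards. Write ${\it Var}(Q')=\{V_1,\ldots,V_k\}\cup\{X_1,\ldots,X_l\}$, where the $X_j$ are the variables of $Q'$ other than the new $V_i$; these $X_j$ are exactly the variables of $Q$ lying outside the maximal aliens, and the variables of $Q$ are the $X_j$ together with the variables occurring inside $t_1,\ldots,t_k$. I would then form the sequence $X_1,\ldots,X_l,t_1,\ldots,t_k$: it consists of $l+k$ distinct terms whose first $l$ entries are variables and whose remaining entries are aliens w.r.t.\ $P$. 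Applying Lemma~\ref{lemma:distinct} to this sequence yields a grounding $\gamma$ for which $X_1\gamma,\ldots,X_l\gamma,t_1\gamma,\ldots,t_k\gamma$ are $l+k$ pairwise distinct ground aliens w.r.t.\ $P$.

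Two verifications then finish the argument. First, $\gamma$ grounds every variable of $Q$ (its domain covers the $X_j$ and all variables inside the $t_i$), so $Q\gamma$ is ground; since $\M_P\models\forall Q$ we get $\M_P\models Q\gamma$, and because (\ref{eq}) holds for ground queries this gives $P\models Q\gamma$. Second, let $\sigma$ be the restriction of $\rho\gamma$ to ${\it Var}(Q')$, so that $\sigma(V_i)=t_i\gamma$ and $\sigma(X_j)=X_j\gamma$ and hence $Q'\sigma=Q'\rho\gamma=Q\gamma$. By the choice of $\gamma$, $\sigma$ maps each variable of $Q'$ to a distinct ground alien w.r.t.\ $P$, and $Q'$ has no aliens w.r.t.\ $P$; by the characterization of generalizations given in the Definitions section, $Q'$ is therefore the generalization of $Q\gamma$ for $P$. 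Corollary~\ref{cor:generalized} applied to $Q\gamma$ now yields $P\models Q\gamma$ iff $P\models Q'$, and together with $P\models Q\gamma$ this gives $P\models Q'$.

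The main obstacle is discharging the hypothesis of Lemma~\ref{lemma:distinct}, and this is precisely where assumptions~(a) and~(b) are consumed. That lemma requires that, whenever the aliens $t_1,\ldots,t_k$ happen to be all ground, there exist $l$ further ground aliens w.r.t.\ $P$ distinct from them. If some $t_i$ is nonground, its main symbol is a non-constant function symbol outside $P$, so~(a) holds and ground aliens are plentiful. Otherwise every $t_i$ is ground, so every variable of $Q$ is among the $X_j$ and $l=n$; assumption~(b) then supplies $n$ constants absent from $P,Q$, which are ground aliens w.r.t.\ $P$ distinct from the $t_i$ (the latter occur in $Q$), giving exactly the $l=n$ terms required. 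In both cases the side condition holds and the construction goes through; the only real delicacy is arranging the single grounding $\gamma$ so that it plays both roles at once, which Lemma~\ref{lemma:distinct} (via Lemma~\ref{lemma:unifier}) manages by avoiding accidental coincidences among the instances.
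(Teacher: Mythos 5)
Your construction follows the paper's own proof in all essentials: feed the variables and the maximal aliens of $Q$ into Lemma~\ref{lemma:distinct} to obtain a grounding $\gamma$ under which they become pairwise distinct ground aliens w.r.t.\ $P$, use the ground-query case of (\ref{eq}) to get $P\models Q\gamma$, observe that $Q'$ is $Q\gamma$ generalized for $P$, and conclude with Corollary~\ref{cor:generalized}. All of those steps of yours are sound (the paper feeds \emph{all} variables of $Q$ into Lemma~\ref{lemma:distinct} rather than only those outside the aliens, but as you note, your $\gamma$ grounds the variables inside the $t_i$ anyway, so this difference is cosmetic).

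The gap is at the one point where hypotheses (a)/(b) are actually consumed: the discharge of the side condition of Lemma~\ref{lemma:distinct}. You split on whether the aliens $t_1,\ldots,t_k$ are all ground, and in the all-ground case you invoke assumption (b) alone. But the lemma assumes (a) \emph{or} (b), and groundness of the $t_i$ is independent of which disjunct holds: it can perfectly well happen that all $t_i$ are ground --- most importantly, that $Q$ has no aliens at all, $k=0$ --- while (b) fails and only (a) is available. Concretely, let $P$ be APPEND, let the language have function symbols $[\,]$, $[\ |\ ]$ and one extra unary symbol $f$, and let $Q=app([X],[Y],[X,Y])$. Then $k=0$, so your second case applies; but the only constant $[\,]$ occurs in $P$, so (b) fails, and your appeal to it is vacuous --- even though (a) holds and the conclusion of the lemma is true. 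This is not a fringe situation: it is exactly the case needed for the paper's main theorem under its headline condition (a). The paper avoids the problem by arguing, within the all-ground case, that the required $l$ ground aliens can be taken \emph{either} from the constants of (b) \emph{or} from the infinite set of ground aliens whose main symbol is the non-constant function symbol of (a) (here the standing assumption $\HU\neq\emptyset$ is used). The repair to your proof is precisely that sentence: inside your all-ground case, split on (a)/(b) instead of assuming (b); the rest of your argument then goes through. (Your first case is correct but redundant: when some $t_i$ is nonground, the side condition of Lemma~\ref{lemma:distinct} holds vacuously.)
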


\begin{proof}
Note that $Q= Q'\varphi$ where 
$\varphi=\{\, X_1/u_1,\ldots X_m/u_m\,\}$,  $\seq[m]X$ are the variables 
of $Q'$ not occurring in $Q$, and $\seq[m]u$ are the maximal aliens in $Q$ 
(precisely: the distinct terms whose occurrences in $Q$ are the maximal aliens
w.r.t.\ $P$).
Let $\seq[n] Y$ be the variables occurring in $Q$.

We construct a ground instance $Q\sigma$ of $Q$, such that 
$Q'$ is $Q\sigma$ generalized for $P$.
To apply Lemma \ref{lemma:distinct} to terms $\seq Y,\seq[m]u$,
note that if $\seq[m]u$ are ground then there exist $n$ ground aliens
w.r.t.\ $P$ pairwise distinct from $\seq[m]u$.
(They are either the constants from condition \ref{lemma:MP:condition2}, 
or can be taken from the infinite set of ground aliens w.r.t.\ $P$ with the main
 symbol from condition  \ref{lemma:MP:condition1}.)
By Lemma \ref{lemma:distinct}, 
there exists a ground instance  $(\seq Y,\seq[m]u)\sigma$, 
consisting of $n+m$ distinct aliens w.r.t.\ $P$,
where the domain of $\sigma$ is $\{\seq Y\}$.

Note that
$\varphi\sigma=\sigma \cup \{\, X_1/u_1\sigma,\ldots, X_m/u_m\sigma\,\} $.
The substitution
maps variables $\seq Y,\seq[m]X$ to distinct aliens 
$(\seq Y,\seq[m]u)\sigma$ w.r.t.\ $P$.
So $Q'$ is $Q'\varphi\sigma$ generalized for $P$.
Thus $P,\ Q'\varphi\sigma$ and $Q'$ satisfy the conditions of 
Corollary~\ref{cor:generalized}.

Now $\M_P\models Q$ implies $\M_P\models Q\sigma$ and then
$P\models Q\sigma$ 
(as equivalence (\ref{eq}) from Introduction holds for ground queries).
As $Q\sigma=Q'\varphi\sigma$, by
Corollary \ref{cor:generalized} $P\models Q'$.
 The ``if'' case is obvious, as $Q$ is an instance of $Q'$.
\end{proof}

\begin{remark}
  The premises of Lemma \ref{lemma:MP} can be weakened by
stating that $Q,Q'$ are atoms such that $Q=Q'\varphi$ for a substitution
$\varphi=\{\, X_1/u_1,\ldots X_m/u_m\,\}$,
where $\seq[m]u$ are distinct aliens w.r.t.\ $P\cup\{Q'\}$, and variables
$\seq[m]X$ do not occur in $Q$.
\end{remark}

\begin{proof}
Obtained by minor modifications of the proof above.
The first sentence, describing $\varphi$, is to be dropped.
Each ``w.r.t.\ $P$'' is to be changed to ``w.r.t.\ $P\cup\{Q'\}$''$\!$.
In the third paragraph, substitution $\varphi\sigma$ together with $P$ and
$Q'$ satisfy the condition of Lemma \ref{lemma:alien:substitution}.
At the end of the proof, 
Lemma \ref{lemma:alien:substitution} should be applied instead of 
Corollary \ref{cor:generalized}.
\end{proof}

It remains to generalize Lemma \ref{lemma:MP} to arbitrary queries.

\begin{corollary}
\label{cor:MP}
Lemma \ref{lemma:MP} also holds for non-atomic queries.  Moreover, condition
\ref{lemma:MP:condition2} of the lemma can be replaced by:
\begin{enumerate}[(a)]
\setcounter{enumi}2
\item
\label{cor:MP:condition3}
for each atom $A$ of $Q$ with  $k\geq0$ (distinct) variables,
the underlying language has (at least) $k$ constants not occurring in $P,A$.
\end{enumerate}
\end{corollary}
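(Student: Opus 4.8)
The plan is to reduce the corollary to the already-proved atomic case of Lemma~\ref{lemma:MP}, treating each atom of $Q$ separately. Write $Q = A_1\wedge\cdots\wedge A_r$, let $Q'$ be $Q$ generalized for $P$, and let $Q'_i$ be the conjunct of $Q'$ corresponding to $A_i$. The first key observation I would establish is that $Q'_i$ is itself $A_i$ generalized for $P$. Indeed, aliens do not span conjuncts (the connective of a query is not a function symbol), so each maximal alien of $Q$ occurs within a single atom, and the maximal aliens of $A_i$ are exactly those of $Q$ lying in $A_i$. Since distinct maximal aliens of $Q$ are replaced in $Q'$ by distinct variables, the same holds within $A_i$, so $Q'_i$ replaces the distinct maximal aliens of $A_i$ by distinct fresh variables, i.e.\ it is a generalization of $A_i$.

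Next I would apply the atomic Lemma~\ref{lemma:MP} to each $A_i$ with generalization $Q'_i$. Under condition~(\ref{lemma:MP:condition1}) this is immediate, as that condition does not mention $Q$. Under the new condition~(\ref{cor:MP:condition3}), applied to the single atom $A_i$ having $k_i$ variables, we obtain $k_i$ constants not occurring in $P,A_i$, which is precisely condition~(\ref{lemma:MP:condition2}) of Lemma~\ref{lemma:MP} for $A_i$ (with $n=k_i$). Either way the atomic lemma yields $\M_P\models A_i$ iff $P\models Q'_i$, for each $i$. Since condition~(\ref{lemma:MP:condition2}) for the whole query implies condition~(\ref{cor:MP:condition3}), this simultaneously covers the claim that Lemma~\ref{lemma:MP} holds verbatim for non-atomic queries.

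Finally I would combine the conjuncts. Truth in $\M_P$ and logical consequence both distribute over conjunction, so $\M_P\models Q$ iff $\M_P\models A_i$ for every $i$, and $P\models Q'$ iff $P\models Q'_i$ for every $i$; chaining these through the per-atom equivalences gives $\M_P\models Q$ iff $P\models Q'$. The step I expect to require the most care is the treatment of aliens that occur in several atoms: such an alien is replaced in $Q'$ by a single variable shared between the corresponding conjuncts $Q'_i$, so one must check that this sharing is harmless. It is, precisely because the universal quantifier distributes over conjunction even when the conjuncts share variables, $\forall(\varphi\wedge\psi)$ being equivalent to $(\forall\varphi)\wedge(\forall\psi)$; this is what makes ``$P\models Q'$ iff $P\models Q'_i$ for all $i$'' hold despite the shared variables. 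Recognising that one must use the conjuncts $Q'_i$ of the \emph{global} generalization (rather than independently generalizing each atom, which would leave a genuine instance gap) is the essential point.
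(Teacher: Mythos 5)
Your proposal is correct and follows essentially the same route as the paper's own proof: decompose $Q$ into its atoms, observe that the conjuncts of the global generalization $Q'$ are themselves generalizations of the individual atoms, apply Lemma \ref{lemma:MP} per atom, and recombine using distributivity of $\models$ over conjunction. Your version merely spells out two points the paper leaves implicit (why each conjunct $Q'_i$ is a valid generalization of $A_i$, and why shared variables among the $Q'_i$ are harmless), which is a sound elaboration rather than a different argument.
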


\begin{proof}
Note that condition \ref{lemma:MP:condition2} implies condition
\ref{cor:MP:condition3}.
So assume that the latter holds.
Let  $Q=\seq[l]A$ generalized for $P$ be $Q'=\seq[l]{A'}$.
Then each $A_i'$ is $A_i$  generalized for $P$.
So Lemma \ref{lemma:MP} applies to each $A_i,A_i'$.  Thus $\M_P\models Q$ implies
$P\models A_i'$, for each $i=1,\ldots,l$.
Hence $P\models Q'$.
\end{proof}

\section{Characterization of program answers by the least Herbrand model}
\label{sec:H}

This section studies when the least Herbrand models exactly characterize the
program answers. 
First a sufficient condition is presented for equivalence (\ref{eq}) from
Introduction.
Then we show that the sufficient condition is also necessary.
Conditions \ref{th:MP:condition1}, \ref{th:MP:condition2} below are the same
as conditions \ref{lemma:MP:condition1}, \ref{cor:MP:condition3}
 of Lemma \ref{lemma:MP} and  Corollary \ref{cor:MP}.

\begin{theorem}
[Characterizing answers by $\M_P$]
\label{th:MP}
Let $P$ be a program, and $Q$ a query such that
\begin{enumerate}[(a)]
\item 
\label{th:MP:condition1}
the underlying language has a non-constant function symbol not occurring in~P,
or 
\item 
\label{th:MP:condition2}
for each atom $A$ of $Q$ with  $k\geq0$ (distinct) variables,
the underlying language has (at least) $k$ constants not occurring in $P,A$.
\nopagebreak
\end{enumerate}
\nopagebreak
Then $\M_P\models Q$ iff $P\models Q$.
\end{theorem}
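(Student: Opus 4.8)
The plan is to reduce the theorem directly to the two results already in hand, routing the equivalence through the generalization $Q'$ of $Q$ for $P$. Concretely, I would establish the chain
\[
\M_P\models Q \ \text{ iff } \ P\models Q' \ \text{ iff } \ P\models Q,
\]
where the first equivalence is supplied by Corollary \ref{cor:MP} and the second by Corollary \ref{cor:generalized}. So almost no new work is needed beyond correctly assembling these pieces.

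First I would dispose of the easy direction: $P\models Q$ implies $\M_P\models Q$ is immediate, since $\M_P$ is a model of $P$ and hence satisfies every logical consequence of $P$. Thus all the content lies in the ``only if'' direction, assuming $\M_P\models Q$. Here I would fix $Q'$ to be $Q$ generalized for $P$, and then observe the crucial bookkeeping point: the hypotheses \ref{th:MP:condition1} and \ref{th:MP:condition2} of the theorem are, as noted in the sentence preceding the statement, exactly conditions \ref{lemma:MP:condition1} and \ref{cor:MP:condition3} of Corollary \ref{cor:MP}. Consequently Corollary \ref{cor:MP} is applicable and yields $\M_P\models Q$ iff $P\models Q'$, and in particular $P\models Q'$.

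To finish, I would invoke Corollary \ref{cor:generalized} with this same $Q'$ (a generalization of $Q$ for $P$), which gives $P\models Q$ iff $P\models Q'$. Chaining the two equivalences produces $\M_P\models Q$ iff $P\models Q$, which is the claim. I do not expect a genuine obstacle: the entire technical burden---constructing a suitable ground instance via Lemmas \ref{lemma:unifier} and \ref{lemma:distinct}, and lifting truth in $\M_P$ to a logical consequence through the generalization argument---has already been absorbed into Lemma \ref{lemma:MP} and Corollary \ref{cor:MP}. The only step requiring real attention is verifying that the labelled conditions of the theorem literally coincide with those of Corollary \ref{cor:MP}, so that the corollary genuinely applies, and ensuring that the $Q'$ fed to Corollary \ref{cor:generalized} is precisely the generalization of $Q$ for $P$ used in the first equivalence.
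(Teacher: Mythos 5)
Your proposal is correct and follows essentially the same route as the paper: generalize $Q$ to $Q'$ for $P$, apply Corollary \ref{cor:MP} to get $P\models Q'$ from $\M_P\models Q$, and note the converse direction is immediate. The only cosmetic difference is that where you invoke the full equivalence of Corollary \ref{cor:generalized} to pass from $P\models Q'$ to $P\models Q$, the paper simply uses the trivial fact that $Q$ is an instance of $Q'$.
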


Note that condition \ref{th:MP:condition1}
 implies that the equivalence holds for every query $Q$,
including queries containing the new symbol.
Also, it holds for every query $Q$ and every finite program $P$ when the
alphabet contains infinitely many function symbols, as then 
condition  \ref{th:MP:condition1} or \ref{th:MP:condition2} is satisfied.
From the theorem the known sufficient conditions follow: 
the alphabet containing infinitely many constants (and $P$ finite),
or $Q$ ground.

Condition \ref{th:MP:condition2}
is implied by its simpler version:
the language has $k\geq0$ constants not occurring in $P,Q$, 
 and each atom of $Q$ contains no more than $k$ variables.

\begin{proof}[Proof of Th.\,\ref{th:MP}]
Let $Q'$ be $Q$ generalized for $P$.
By Corollary \ref{cor:MP},
 $\M_P\models Q$ implies $P\models Q'$, hence $P\models Q$, as $Q$ is an
instance of $Q'$.
The reverse implication is obvious.
\end{proof}

We conclude with showing in which sense
the sufficient condition of Th.\,\ref{th:MP}
is also necessary.  As expected, it is strictly speaking not a necessary
condition for (\ref{eq}), 
as it is violated for some $P,Q$ for which (\ref{eq}) holds. 

\begin{example}

\noindent
Consider program APPEND and assume that the only function symbols of the
underlying language are $[\,]$, 
 $[\  |\ ]$.
 Let $Q=app( [X], [Y], [X,Y] )$.
Then $\M_{\rm APPEND}\models Q$ and ${\rm APPEND}\models Q$, 
but the condition of Th.\,\ref{th:MP} is violated.

On the other hand, consider a program $P$ of three clauses 
$  app(\,[\,],L,L\,)$.\,;
{\small
$app(\,[[\,]|K],L,[[\,]|M]\,) \gets \linebreak[3] app(\,K,L,M\,)$.\,;
\hspace{0pt plus .5ex}%
$app(\,[[H|T]|K],L,[[H|T]|M]\,) \gets \linebreak[3]  app(\,K,L,M\,)$.
}
Programs \mbox{APPEND} and $P$ have the same least Herbrand model but
different sets of answers,
as e.g.\ $P\notmodels Q$.
The condition of Th.\,\ref{th:MP} is violated by $P,Q$, and the equivalence
does not hold.
Note that $P$ cannot be used to append lists when new function symbols are
added to the language; $app([a],[b],[a,b])$ is then not an answer of~$P$.
\sloppy
\end{example}

 Roughly speaking, 
the sufficient conditions of Th.\,\ref{th:MP} and Lemma \ref{lemma:MP}
are also necessary, 
when all what is known about a program is the set of function
symbols employed in it:

\begin{proposition}
\label{prop:counterexample}
Let $\F$ be the set of function symbols of the underlying language, and
$F_0\subseteq \F$ be its finite subset.
Let $Q$ be a query, 
such that the predicate symbols of the atoms of $Q$ are distinct.
Assume that  $\M_P\models Q$ iff $P\models Q$, for each finite program $P$
such that 
$F_0$ is the set of function symbols occurring in $P$.  Then the 
sufficient condition of Th.\,\ref{th:MP} holds.
\end{proposition}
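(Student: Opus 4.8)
The plan is to argue by contraposition: assuming the sufficient condition of Theorem~\ref{th:MP} fails, I would exhibit a single finite program $P$ whose function symbols are exactly $F_0$ and for which $\M_P\models Q$ but $P\notmodels Q$, contradicting the hypothesis. (Recall that $P\models Q$ always implies $\M_P\models Q$, so only this direction can fail.) Failure of condition~\ref{th:MP:condition1} means every non-constant symbol of $\F$ already occurs in $F_0$, so the only aliens w.r.t.\ $F_0$ are constants; failure of condition~\ref{th:MP:condition2} gives an atom $A=p(\seq[r]s)$ of $Q$ with $k\ge1$ distinct variables such that fewer than $k$ constants lie outside $F_0$ and $A$. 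Let $A'$ be $A$ generalized for $F_0$, with $N$ variables in total. The alien constants of $A$ contribute $N-k$ of these variables, while at most $k-1$ further aliens exist outside $A$, so the total number $M$ of alien constants satisfies $M\le N-1<N$: there are strictly fewer aliens than variables of $A'$. This inequality is the engine of the whole argument.

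Since the predicate symbols of the atoms of $Q$ are distinct, I would define each predicate independently, so that $\M_P\models Q$ (resp.\ $P\models Q$) reduces to the corresponding statement for each atom separately. For every atom of $Q$ other than $A$ I would add a fact $q(\seq Z)\gets$, making it true in $\M_P$; the entire difficulty concentrates on $A$, where I must secure $\M_P\models A$ together with $P\notmodels A$. By Corollary~\ref{cor:generalized} the latter is equivalent to $P\notmodels A'$, so it suffices to keep all ground instances of $A'$ in $\M_P$ while falsifying $A'$ in some model of $P$.

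For the positive half I would force $\M_P$ to contain every ground instance of $A'$ using the clauses $p(\seq[r]{s'})\gets \mathit{rooted}(V_i)$, one for each variable $V_i$ of $A'$, together with one clause per pair of variables obtained by identifying that pair in $A'$; here $\mathit{rooted}$ is an auxiliary predicate defined by $\mathit{rooted}(c)\gets$ for each constant $c\in F_0$ and $\mathit{rooted}(g(\seq[j]W))\gets$ for each non-constant $g\in F_0$. A clause thus fires exactly when the instantiation of the variables of $A'$ makes some variable $F_0$-rooted or makes two of them equal. The inequality $M<N$ guarantees, by pigeonhole, that no grounding of the $N$ variables over $\HU$ can consist of $N$ distinct aliens; hence every ground instance of $A'$ meets this guard and is generated, so $\M_P\models A'$ and a fortiori $\M_P\models A$. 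These clauses also make every symbol of $F_0$ occur in $P$, so the function symbols of $P$ are exactly $F_0$.

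For the negative half I would falsify $A'$ in the least Herbrand model $\mathcal I$ of $P$ taken over $\F$ enlarged by $N$ fresh constants $*_1,\dots,*_N$, assigning $*_i$ to the $i$-th variable of $A'$. Since the $*_i$ are new constants in a free term algebra, none is $F_0$-rooted and they are pairwise distinct, so every guard fails on this assignment; freeness moreover prevents any other instantiation from producing the same argument tuple, so that tuple is absent from $\mathcal I$ and $\mathcal I\notmodels A'$, whence $P\notmodels A'$ and $P\notmodels A$. The main obstacle is exactly this tension in the treatment of $A$: because $P$ may not name the alien constants, the ground instances of $A$ that place aliens in variable positions can only be produced by clauses with unguarded head variables, which are precisely the clauses liable to force $P\models A$; the resolution is the guard above, whose soundness rests on $M<N$, together with the choice of a free extension as the refuting model, which stops the guarded clauses from re-deriving the forbidden tuple.
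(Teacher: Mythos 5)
Your proof is correct and takes essentially the same route as the paper's: contraposition, the same pigeonhole count (once condition (a) fails all aliens are constants, and there are strictly fewer of them than variables of the generalized atom $A'$), a program whose clauses for the predicate of $A$ derive exactly those instances of $A'$ in which two variables coincide or some variable is $F_0$-rooted (hence, by pigeonhole, all ground instances over the original alphabet), and a refutation obtained from the least Herbrand model over an alphabet enlarged by fresh constants. The differences are only cosmetic: the paper inlines your $\mathit{rooted}$ guard by listing as facts the heads with two variables identified or with one argument of the form $f(\vec{Z})$ for $f\in F_0$ (thus needing no auxiliary predicate symbol), uses the remaining atoms of $Q$ themselves as facts where you use most general facts, and refutes a ground instance of $A$ directly rather than refuting $A'$ and transferring back via Corollary~\ref{cor:generalized}.
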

The proposition also holds when $F_0$ and the considered program $P$ are
infinite.

\begin{proof}
Let $Q$ be a query whose atoms have distinct predicate symbols.
Assume that the sufficient condition of Th.\,\ref{th:MP} does not hold.
We show that for a certain program $P$ 
(such that $F_0$ is the set of the function symbols occurring in $P$), 
\linebreak[3]
\mbox{$\M_P\models Q$} but $P\notmodels Q$.

As condition \ref{th:MP:condition1} of Th.\,\ref{th:MP} does not hold, 
all the non-constant function symbols of $\F$ are in $F_0$.
As condition \ref{th:MP:condition2} does not hold, 
there is an atom $A$ in $Q$ with $k$ distinct variables $\seq[k]Y$,
for which the number of constants from $\F\setminus F_0$ not occurring in
$A$ is $l<k$\/; let  $\seq[l]a$  be the constants.
The atom can be represented as $A = B[\seq[n]b,\seq[k]Y]$,
where
$\seq[n]b$ are those (distinct) constants of $A$ which are not in $F_0$.%
\footnote{%
 Formally, $B[\seq[n+k]t]$ can be defined as the instance
 $B\{V_1/t_1,\ldots V_{n+k}/t_{n+k}\}$ of an atom $B$, whose
 (distinct) variables are $\seq[n+k]V$, and 
 whose function symbols are from $F_0$.
}
So $\F\setminus F_0 = \{ \seq[l]a,\seq[n]b \}$.
Let $P_0$ be the set of atoms of $Q$ except for $A$.
Let ${\cal V} = \{\seq[n+k-1]X\}$ be $n+k-1$ distinct variables.
Let $P$ consist of the unary clauses of $P_0$ and the
unary clauses of the form $B[\seq[n+k]t]$ where 
(i)~$\{\seq[n+k]t\} =\cal V $
(so a variable occurs twice),
or (ii)~$\{\seq[n+k]t\} ={\cal V} \cup \{f(\vec Z)\} $
where $f\in F_0$, its arity is $m\geq0$,
and $\vec Z$ is a
tuple of $m$ distinct variables pairwise distinct from those in $\cal V$.
Note that $P$ is finite iff $F_0$ is.

Each ground atom $B'=B[\seq[n+k]u]$ (where $\seq[n+k]u\in\HU$)
is an instance of some clause of $P$,
as if $\seq[n+k]u$ are distinct terms then the main symbol of some of
them is in $F_0$, and $B'$ is an instance of a clause of the form (ii),
otherwise $B$ is an instance of a clause of the form (i).
Thus  $\M_P\models A$, hence $\M_P\models Q$
(as $P_0\models A'$ for each atom $A'$ of $Q$ distinct from $A$).
To show that  $P\notmodels Q$, 
add new constants $a_{l+1},\ldots,a_k$ to the alphabet.
Then $B[\seq[n]b,\seq[k]a]$ is not an instance of any clause of $P$, 
so $B[\seq[n]b,\seq[k]a]$ is false in the least Herbrand model of $P$
with the extended alphabet.
\end{proof}

The proof provides a family of counterexamples for a claim that 
$\M_P\models Q$ and $P\models Q$ are equivalent.
In particular, setting $Q=p(V)$ ($k=1$, $n=0$) results in 
$P = \{\, p(f(\vec Z)) \mid f\in F \,\}$,
a generalization 
of the counterexample from Introduction to any underlying
finite set $F$ of function symbols.

From the proposition it follows that a more general sufficient condition
(than that of Th.\,\ref{th:MP}) for
the equivalence of $\M_P\models Q$ and $P\models Q$ is impossible,
unless it uses more information about $P$ than just the set of involved symbols.

\section{Conclusion}

In some cases the least Herbrand model does not characterize the set of
answers of a definite program.  
This paper generalizes the sufficient condition for
$\M_P\models Q$ iff $P\models Q$, to
``a non-constant function symbol not in $P$, or $k$ constants not in $P,A$
for each atom $A$ of $Q$''.
It also shows
that the sufficient condition cannot be improved unless more is
known about the program than just which function symbols occur in it.
  As a side effect, it is shown
  which more general queries are implied to be answers of $P$ by $Q$ being an
  answer.

\paragraph{Acknowledgement.}
Comments of anonymous referees helped improving the presentation.

\bibliographystyle{acmtrans}
\bibliography{bibshorter,bibmagic,bibpearl}

\end{document}